\newtheorem{proposition?}{Proposition?}
\newtheorem{theorem}{Theorem}
\theoremstyle{definition}
\newtheorem{definition}{Definition}
\newcommand{\complex}{\mathbb C} 
\newcommand{\integer}{\mathbb Z} 
\newcommand{\half}{\tfrac{1}{2}} 
\newcommand{\hi}{\mathcal{H}} 
\newcommand{\hik}{\mathcal{K}} 
\newcommand{\kb}[2]{|#1\rangle\langle#2|} 
\newcommand{\tr}[1]{\mathrm{tr}\left[#1\right]} 
\newcommand{\ptr}[2]{\mathrm{tr}_{#1}[#2]} 
\newcommand{\id}{\mathbbm{1}} 
\newcommand{\A}{\mathsf{A}}
\newcommand{\B}{\mathsf{B}}
\newcommand{\C}{\mathsf{C}}
\newcommand{\M}{\mathsf{M}}
\newcommand{\luders}[1]{\mathcal{L}_{#1}}
\newcommand{\hin}{\hi_{in}} 
\newcommand{\hout}{\hi_{out}} 
\newcommand{\hei}[1]{{#1}^\ast} 
\begin{document}

\title[]{Universality of Sequential Quantum Measurements}

\author{Teiko Heinosaari$^\natural$}
\address{$\natural$ Turku Centre for Quantum Physics, Department of Physics and Astronomy, University of Turku, Finland}
\email{teiko.heinosaari@utu.fi}

\author{Takayuki Miyadera$^\flat$}
\address{$\flat$ Department of Nuclear Engineering, Kyoto University, 6158540 Kyoto, Japan}
\email{miyadera@nucleng.kyoto-u.ac.jp}

\begin{abstract}
Unavoidable disturbance caused by a quantum measurement implies that the realizable subsequent measurements are getting limited after one performs some measurement. 
The obvious general limitation that one cannot circumvent by sequential or any other method is that the actually implemented measurements must be jointly measurable.
In this work we show that any jointly measurable pair of observables can be obtained in a sequential measurement scheme, even if the second observable would be decided after the first measurement.
This universality feature holds only for measurement schemes with a specific structure.
As a supplementing result, we provide a characterization of all possible joint measurements obtained from a sequential measurement lacking universality.
\end{abstract}

\pacs{03.65.Ta}

\maketitle


Sequential quantum measurements have received renewed attention recently, and their range of application has become broader.
This concept has been investigated, for example, in the context of state discrimination \cite{BeFeHi13}, tomography \cite{CaHeTo12}, cryptography \cite{WeFrWe09,Miyadera11ijqi} and undecidability \cite{EiMuGo12}. 
However, the fundamental limitations of the sequential method of performing quantum measurements have not been addressed systematically.

In the present work we show that any jointly measurable pair of observables can be obtained via a sequential measurement scheme, even if the second observable is decided \emph{after} the first measurement.
Thus, it is possible to perform a measurement of any quantum observable in a way which does not limit the future measurements any more than is dictated by joint measurability.
This striking feature of the first measurement, which we call \emph{universality}, holds only for certain measurement schemes.
A sequential measurement scheme lacking universality allows the implementation some joint measurements, but not all that are possible.
As a supplementary result, we derive a characterization of all possible joint measurements obtained from any nonuniversal measurement scheme.

\paragraph{\bf{Sequential and joint measurements.}}

The mathematical framework for sequential quantum measurements was provided a long time ago \cite{DaLe70}.
For our purposes it suffices to use a rudimentary formulation. 
We will describe a quantum measurement as a pair $(\A,\Lambda)$ consisting of an observable $\A$ and a channel $\Lambda$, where $\A$ assigns a probability distribution of measurement outcomes to each input state and $\Lambda$ maps each input state $\varrho$ into an output state $\Lambda(\varrho)$; see Fig.  \ref{fig:measurement}.
One can obviously give more detailed descriptions of a quantum measurement, but this kind of simple description is enough for our present purposes \footnote{Two common levels of descriptions are those given by instruments and measurement models. An overview can be found in \cite{MLQT12}.}.

Mathematically, an observable (POVM) is presented as a function $\A:x\mapsto \A(x)$ from a finite set of measurement outcomes $\Omega_\A \subset \integer$ to the set of positive of operators on an input Hilbert space $\hin$, and this function must satisfy the normalization constrain $\sum_{x\in\Omega_\A} \A(x)=\id$, where $\id$ is the identity operator on $\hin$ \footnote{We will concentrate on observables with a finite number of outcomes. Generally, an observable is presented as a positive operator valued measure; see e.g. \cite{MLQT12}}.
The probability of obtaining an outcome $x$ for an input state $\varrho$ is $\tr{\varrho\A(x)}$.

A channel $\Lambda$ is presented as a completely positive and trace preserving linear map on Hilbert space operators. 
It transforms an input state $\varrho$ on $\hin$ into an output state $\Lambda(\varrho)$ on $\hout$.
We allow $\hout$ to be different than $\hin$; physically this amounts to either including an environment in the description of the output system, or to discarding some part of the input system. 

An important point is that not every pair of an observable and channel gives a valid description of a quantum measurement.
Specifically, a channel $\Lambda$ and an observable $\A$ can describe the same quantum measurement if and only if there exist completely positive maps $\Phi_x$ such that
\begin{equation}\label{eq:Kraus}
\sum_x \Phi_x = \Lambda \quad \textrm{and} \quad \tr{\Phi_x(\varrho)}=\tr{\varrho \A(x)}
\end{equation}
for all outcomes $x$ and input states $\varrho$ \cite{Ozawa84}; in this case we say that $\Lambda$ is \emph{$\A$-channel}.
Perhaps the most commonly used $\A$-channel is the L\"uders channel $\luders{\A}$ of $\A$, which is defined as $\luders{\A}(\varrho)=\sum_x \sqrt{\A(x)} \varrho \sqrt{\A(x)}$.

By a \emph{sequential measurement} we mean a setting where two or more measurements are performed on the same system, one after the other.
We will concentrate on the case of two measurements.
The first measurement must be described as an observable-channel pair $(\A,\Lambda)$, while for the second measurement it is enough to specify just the observable $\B$ since we do not examine the output state after the second measurement. 
If the initial state of the system is $\varrho$, then the obtained measurement outcome distributions are $\tr{\varrho \A(x)}$ and $\tr{\Lambda(\varrho) \B(y)}$.
We are typically interested in the properties of the input state $\varrho$ rather than the output state $\Lambda(\varrho)$, hence it is convenient to use the Heisenberg picture and write the second measurement outcome distribution as $\tr{\varrho \hei{\Lambda}(\B(y))}$, where $\hei{\Lambda}$ is the adjoint action of $\Lambda$ on the set of observables \footnote{If $\Lambda$ is a channel written in the Schr\"odinger picture, then $\hei{\Lambda}$ is defined by the formula $\tr{\Lambda(\varrho)T}=\tr{\varrho \hei{\Lambda}(T)}$, required to hold for all states $\varrho$ and operators $T$.}.
In essence, a sequential measurement of $\A$ and $\B$ gives measurement outcomes of $\A$ and $\hei{\Lambda}(\B)$  on the input state $\varrho$.

A concept related to sequential measurements is that of a joint measurement.  
Two observables $\A$ and $\B$ are \emph{jointly measurable} if there exists an observable $\M$ having the product set $\Omega_{\A}\times\Omega_{\B}$ as the set of measurement outcomes and satisfying
\begin{align}\label{eq:joint}
\A(x)=\sum_y \M(x,y) \, , \quad \B(y)=\sum_x \M(x,y) 
\end{align}
for all $x\in\Omega_{\A}$ and $y\in\Omega_{\B}$.
Any observable $\M$ satisfying \eqref{eq:joint} is called a \emph{joint observable} of $\A$ and $\B$.
This definition easily extends to any finite number of observables; $\A,\B,\C,\ldots,$ are jointly measurable if there is a single observable $\M$ whose marginals coincide with  $\A,\B,\C,\ldots$.
Note that the specification of measurement dynamics is not needed in the definition of joint measurability, unlike in that of sequential measurement. 

A sequential measurement of two quantum observables can be seen as a special type of joint measurement.
Formally, if we have maps $\Phi_x$ satisfying \eqref{eq:Kraus}, then we can define $\M(x,y)=\hei{\Phi}_x(\B(y))$.
This is a joint observable of $\A$ and the perturbed version $\hei{\Lambda}(\B)$ of $\B$.
At first sight, joint measurement is a broader concept than sequential measurement; a joint measurement is any type of measurement from which one can extract the desired probability distributions of measurement outcomes, whereas in a sequential measurement one has to measure two observables, one after the other. 
Hence, an immediate question arises:
does the sequential method of measuring two quantum observables suffer from limitations specific to it, or can one perform all possible joint measurements in this way?
In the following we will show that there are no additional limitations, and there is even a surprising advantage in certain kinds sequential measurements, which we will call \emph{universality}.

\begin{figure}
    \centering
    
         \includegraphics[width=6cm]{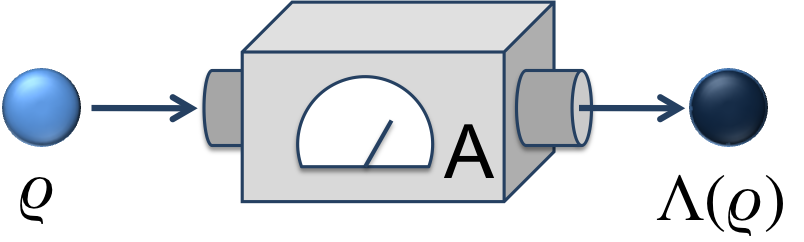}     
   
        \caption{(Color online)    A quantum measurement leads to a classical output (measurement outcome) and quantum output (transformed quantum state).
        The respective mappings $\A$ and $\Lambda$ are called \emph{observable} and \emph{channel}. }
    \label{fig:measurement}
\end{figure}

\paragraph{\bf{Sharp observables.}}

As a warm up, let us assume that $\A$ is a \emph{sharp observable}, i.e., $\A$ is a POVM and each operator $\A(x)$ is a projection. If we perform a standard von Neumann measurement of $\A$, then the state transformation is described by the L\"uders channel of $\A$. 
Supposing that the subsequently measured observable is $\B$, then the actually implemented perturbed version is given by $y \mapsto \sum_x \A(x)\B(y)\A(x)$. 
We see that if $\A$ and $\B$ commute (i.e. $\A(x)\B(y)=\B(y)\A(x)$ for all $x,y$), then $\sum_x \A(x)\B(y)\A(x)=\B(y)$ and this sequential measurement is a joint measurement of $\A$ and $\B$. 
On the other hand, it is well known that a sharp observable is jointly measurable with another observable if and only if they commute; see e.g. \cite{HeReSt08}. 
We conclude that a joint measurement of a sharp observable $\A$ and some other observable $\B$ can always be implemented as a sequential measurement of $\A$ and $\B$. 

The previously described case is a particular instance of the class of measurement schemes where the first measurement does not disturb the second one at all; the \emph{nondisturbance condition} requires that
\begin{align}\label{eq:nondisturbance}
\tr{\Lambda(\varrho)\B(y)}= \tr{\varrho \B(y)}
\end{align}
for all input states $\varrho$ and outcomes $y$.
Obviously, if the nondisturbance condition holds, then a sequential measurement described by $(\A,\Lambda)$ and $\B$ implements a joint measurement of $\A$ and $\B$ even if $\A$ is not sharp. 
The condition \eqref{eq:nondisturbance} holds, for instance, if $\A$ and $\B$ commute and we choose $\Lambda$ to be the L\"uders channel of $\A$. However, for observables that are not sharp, the condition \eqref{eq:nondisturbance} may be fulfilled for some $\A$-channel even if  $\A$ and $\B$ do not commute, and joint measurability may hold even if there is no nondisturbing measurement at all \cite{HeWo10}. 
A special feature of sharp observables is the equivalence of commutativity, nondisturbance and joint measurability.

\paragraph{\bf{Channels having the universal property.}}

\begin{figure}
    \centering
    \subfigure[]
    {
         \includegraphics[width=3.5cm]{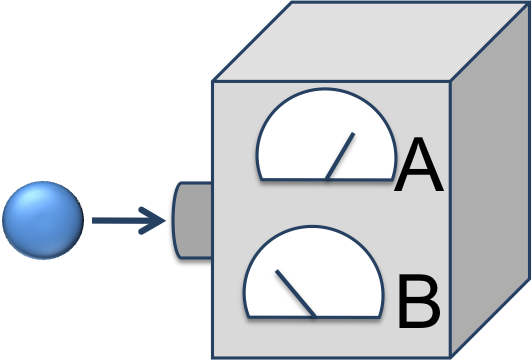}      }
    \subfigure[]
    {
        \includegraphics[width=7cm]{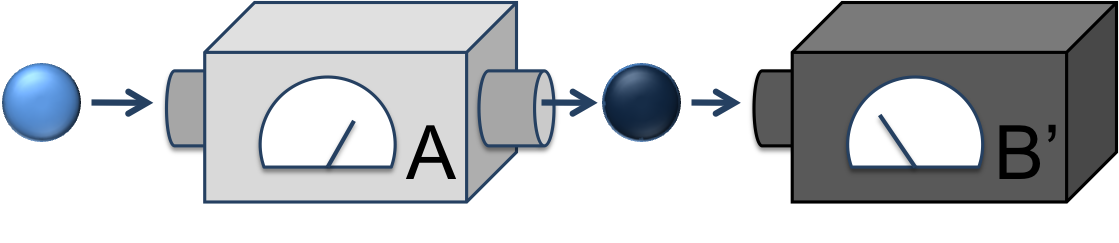} 

    }
        \caption{(Color online)    (a) A joint measurement of observables $\A$ and $\B$ gives measurement outcomes for both $\A$ and $\B$.         (b) Sequential measurement of $\A$ and a modified observable $\B'$ is equivalent to the joint measurement if the modification properly compensates for the disturbance that the first measurement causes.}
    \label{fig:joint-and-seq}
\end{figure}

In the general case, the first measurement disturbs the second one, but a joint measurement may still be possible.
In practice, this means that to obtain a joint measurement of some observables $\A$ and $\B$, we may need to measure first $\A$ and then $\B'$, which is a modified version of $\B$.
The modification is aimed to compensate the disturbance that the first measurement causes; see Fig. \ref{fig:joint-and-seq}.
We are thus looking for an $\A$-channel $\Lambda$ and an observable $\B'$ such that
\begin{align}\label{eq:sequential}
\tr{\Lambda(\varrho)\B'(y)}=\tr{\varrho \B(y)}
\end{align}
for all input states $\varrho$ and outcomes $y$, or equivalently in the Heisenberg picture, 
\begin{align}\label{eq:sequential-2}
\hei{\Lambda}(\B'(y))=\B(y)
\end{align}
for all outcomes $y$.

These equations can be interpreted in two ways. 
First, we may measure some observable $\B'$ after $\A$, typically as sharp, informative or good as possible, and then \eqref{eq:sequential-2} determines the actually implemented observable $\B$.
Second, we may want to obtain exactly $\B$ as the second observable, in which case we can try to tailor $\Lambda$ and $\B'$ in such a way that $\B$ is acquired.  

As a preliminary step towards our main result, we recall a simple construction which shows that the required objects $\Lambda$ and $\B'$ exist whenever $\A$ and $\B$ are jointly measurable \cite{HeWo10}.
Namely, suppose that $\A$ and $\B$ are jointly measurable, hence there exists $\M$ such that $\A(x)=\sum_y \M(x,y)$ and $\B(y)=\sum_x \M(x,y)$.
We define a channel $\Lambda^\B$ with an output Hilbert space $\hi_{out}=\complex^{|\Omega_{\B}|}$
as
\begin{equation}\label{eq:A->B}
\Lambda^\B(\varrho)
=\sum_y \mbox{tr}[\varrho \B(y)] |y \rangle \langle y|, 
\end{equation}
where $\{|y\rangle \}_{y \in \Omega_{\B}}$ is an orthonormal basis 
of $\hi_{out}$. 
Since
\begin{equation}
\Lambda^\B(\varrho)= \sum_{x} \left( \sum_y \tr{\varrho \M(x,y)}|y\rangle 
\langle y| \right)
\end{equation}
and
\begin{equation}
\mbox{tr}\left[ \sum_y \tr{\varrho \M(x,y)}|y\rangle 
\langle y| \right] 
=\mbox{tr}[\varrho \A(x)]
\end{equation}
we conclude that $\Lambda^\B$ is an $\A$-channel.
Moreover, 
\begin{equation}\label{eq:lambdab}
\mbox{tr}\left[ \Lambda^{\B}(\varrho) |y\rangle 
\langle y | \right] =\mbox{tr}[\varrho \B(y)]
\end{equation}
so that \eqref{eq:sequential} holds for the choices 
$\B'(y)=|y\rangle \langle y |$ and $\Lambda=\Lambda^\B$.
In conclusion, this sequential measurement scheme implements a joint measurement of $\A$ and $\B$.

The previously defined sequential measurement is not very useful since the applied $\A$-channel $\Lambda^\B$ is  designed specifically for $\B$.
This weakness becomes clear when we consider a collection of observables that are pairwisely jointly measurable without being jointly measurable as a whole \cite{HeReSt08,KuHeFr13} -- suppose  $\A$, $\B$ and $\C$ are such.
Hence, $\A$ is pairwisely jointly measurable with both $\B$ and $\C$, but the triplet $(\A,\B,\C)$ is not jointly measurable.
(An example of this sort of triplet is formed when the usual $x,y,z$-spin-component observables of a spin-$\half$ particle are mixed with uniform noise with a mixing parameter $t$ chosen from the interval $1/\sqrt{3} < t \leq 1/\sqrt{2}$.)
Suppose further that our task is to measure either the pair $(\A,\B)$ or $(\A,\C)$, but we will be told the desired pair only after we have performed the measurement of $\A$.
Since the triplet $(\A,\B,\C)$ is not jointly measurable, it is not clear how to succeed in this task.
In particular, the sequential measurement scheme related to $\Lambda^\B$ does not work since in that case the first measurement has to be chosen according to the second one, and 
one can see that for any observable $\C'$ on $\hi_{out}$,  we get $\hei{(\Lambda^{\B})}(\C'(z)) = \sum_y \langle y |\C'(z) |y\rangle \B(y)$, which is just a smearing of $\B$.

To be able to overcome this drawback, we need an $\A$-channel $\Lambda$ that satisfies the criterion \eqref{eq:sequential} for both $\B$ and $\C$ with some modified versions $\B'$ and $\C'$ on the left hand side, respectively.
In the best case we would have an $\A$-channel $\Lambda$ that satisfies the sequential measurement criterion \eqref{eq:sequential} for all observables that are jointly measurable with $\A$.
This motivates the following definition.

\begin{definition}
An $\A$-channel $\Lambda$ has the \emph{universal property} (relative to $\A$)
if for each observable $\B$ jointly measurable with $\A$, there exists an observable $\B'$ such that
\begin{align}\label{eq:universal}
\tr{\Lambda(\varrho) \B'(y)}=\tr{\varrho \B(y) }
\end{align}
for all input states $\varrho$ and outcomes $y\in\Omega_\B$.
\end{definition}

The universal property means that the measurement of an observable $\A$ limits the future measurements no more than is necessary, thus putting no additional limitations (i.e. other than joint measurability) on the  measurements that can be implemented later. 
Our main result states that these kinds of measurements exist.

\begin{theorem}\label{thm:universal}
For every observable $\A$, there exists an $\A$-channel $\Lambda_\A$ having the universal property.
\end{theorem}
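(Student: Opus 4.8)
The plan is to exhibit a single channel that records, alongside the classical outcome of $\A$, just enough quantum side information that any observable jointly measurable with $\A$ can be recovered by an appropriate second measurement. The natural candidate is a L\"uders-type channel that keeps the outcome label in a classical register,
$$\Lambda_\A(\varrho)=\sum_x \kb{x}{x}\otimes \sqrt{\A(x)}\,\varrho\,\sqrt{\A(x)},$$
acting into the output space $\complex^{|\Omega_\A|}\otimes\hin$, where $\{\ket{x}\}_{x\in\Omega_\A}$ is an orthonormal basis. First I would verify that $\Lambda_\A$ is an $\A$-channel by exhibiting the completely positive maps $\Phi_x(\varrho)=\kb{x}{x}\otimes\sqrt{\A(x)}\varrho\sqrt{\A(x)}$, which satisfy $\sum_x\Phi_x=\Lambda_\A$ together with $\tr{\Phi_x(\varrho)}=\tr{\varrho\A(x)}$, so that \eqref{eq:Kraus} holds.

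Next I would compute the Heisenberg dual. For a block operator $T=\sum_{x',x''}\kb{x'}{x''}\otimes T_{x'x''}$ on the output space, a direct calculation using $\kb{x}{x}\kb{x'}{x''}=\delta_{x,x'}\kb{x}{x''}$ gives $\hei{\Lambda_\A}(T)=\sum_x\sqrt{\A(x)}\,T_{xx}\,\sqrt{\A(x)}$, so only the diagonal blocks survive. Now, given any $\B$ jointly measurable with $\A$, fix a joint observable $\M$ with $\A(x)=\sum_y\M(x,y)$ and $\B(y)=\sum_x\M(x,y)$. The aim is to design a block-diagonal correction $\B'(y)=\sum_x\kb{x}{x}\otimes B'_x(y)$ so that $\hei{\Lambda_\A}(\B'(y))=\sum_x\sqrt{\A(x)}B'_x(y)\sqrt{\A(x)}=\B(y)$. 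This identity is forced termwise by the ansatz $B'_x(y)=\A(x)^{-1/2}\M(x,y)\A(x)^{-1/2}$, which yields $\sqrt{\A(x)}B'_x(y)\sqrt{\A(x)}=\M(x,y)$.

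The main obstacle is that $\A(x)$ need not be invertible, so this inversion must be read as a generalized inverse supported on the range of $\A(x)$. Here the crucial structural input is that $0\le\M(x,y)\le\A(x)$, which follows from $\A(x)=\sum_{y'}\M(x,y')$ and forces $\ker\A(x)\subseteq\ker\M(x,y)$; this guarantees that each $B'_x(y)$ is a well-defined positive operator and that $\sqrt{\A(x)}B'_x(y)\sqrt{\A(x)}=\M(x,y)$ holds exactly. Summing over $y$ gives $\sum_y B'_x(y)=P_x$, the projection onto the range of $\A(x)$, so $\B'$ is only subnormalized. To repair normalization I would add the defect $\id-P_x$ to a single fixed outcome $y_0$, setting $\B'(y_0)\mapsto \B'(y_0)+\sum_x\kb{x}{x}\otimes(\id-P_x)$. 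Because $\sqrt{\A(x)}(\id-P_x)=0$, this correction is annihilated by $\hei{\Lambda_\A}$ and therefore leaves the identity $\hei{\Lambda_\A}(\B'(y))=\sum_x\M(x,y)=\B(y)$ intact, while making each $\B'$ a genuine POVM. This establishes \eqref{eq:universal} for every $\B$ jointly measurable with $\A$, completing the proof.
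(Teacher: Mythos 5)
Your proof is correct, and the channel you construct is essentially the paper's channel in disguise: setting $V\psi=\sum_x \ket{x}\otimes\sqrt{\A(x)}\psi$ and $\hat{\A}(x)=\kb{x}{x}\otimes\id$ gives a Naimark dilation of $\A$, and your $\Lambda_\A$ is exactly $\sum_x\hat{\A}(x)V\varrho V^*\hat{\A}(x)$, which is the form used in the paper (there built from a \emph{minimal} dilation). Where you genuinely diverge is in verifying universality. The paper factors the classical channel $\Lambda^\B$ of \eqref{eq:A->B} through $\Lambda_\A$ as $\Lambda^\B=\Gamma^\B\circ\Lambda_\A$, constructing $\Gamma^\B$ from a Naimark dilation of the joint observable $\M$ and the intertwining isometry $J$ guaranteed by minimality; the observable $\B'$ is then defined abstractly as $\hei{(\Gamma^\B)}(\kb{y}{y})$. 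You instead write down $\B'$ in closed form, $\B'(y)=\sum_x\kb{x}{x}\otimes\A(x)^{-1/2}\M(x,y)\A(x)^{-1/2}$ (plus the rank-defect correction), and check $\hei{\Lambda_\A}(\B'(y))=\sum_x\M(x,y)=\B(y)$ directly. This is more elementary and more explicit: it needs no dilation of $\M$, no intertwiner, and in particular no minimality of the dilation of $\A$. The steps you flag as delicate are handled correctly: $0\le\M(x,y)\le\A(x)$ does force $\ker\A(x)\subseteq\ker\M(x,y)$, which makes the pseudo-inverse sandwich reproduce $\M(x,y)$ exactly and makes the subnormalization defect $\id-P_x$ invisible to $\hei{\Lambda_\A}$. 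The one point worth stating explicitly if $\hin$ is infinite-dimensional is that $\A(x)^{-1/2}\M(x,y)\A(x)^{-1/2}$ is a \emph{bounded} positive operator; this follows from $\M(x,y)\le\A(x)$ via the Douglas factorization lemma, and is automatic in finite dimensions. What the paper's dilation-theoretic route buys in exchange for its extra machinery is the identification of $\Lambda_\A$ with the least disturbing $\A$-channel of \cite{HeMi13} on the minimal (hence smallest possible) output space, whereas your output space $\complex^{|\Omega_\A|}\otimes\hin$ is generally larger than necessary.
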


Before giving the proof, we recall that each observable $\A$ has a \emph{Naimark dilation} \cite{CBMOA03}, i.e., a triplet $(\hik, \hat{\A}, V)$ where  $\hik$ is a Hilbert space, $V: \hin \to \hik$ is an isometry and $\hat{\A}$ is a sharp observable on $\hik$ satisfying $V^*\hat{\A}(x)V=\A(x)$ for each $x \in \Omega_{\A}$.
Moreover, there exists a \emph{minimal} Naimark dilation, meaning that the set $\{\sum_x c_x \hat{\A}(x) V \psi : c_x \in \complex, \psi \in \hi\}$ is dense in $\hik$.
This is essentially unique in the sense that if $(\hik_1, \hat{\A}_1, V_1)$ is a minimal Naimark dilation and $(\hik_2, \hat{\A}_2, V_2)$ is any other Naimark dilation of $\A$, then there exists an isometry $J:\hik_1\to\hik_2$ satisfying $J\hat{\A}_1(x)=\hat{\A}_2(x)J$ and $JV_1=V_2$. 
In particular, $J$ is a unitary operator if both Naimark dilations are minimal.

\begin{proof}
Fix a minimal Naimark dilation $(\hik, \hat{\A}, V)$ of $\A$.
We define a channel $\Lambda_{\A}$ with input and output Hilbert spaces $\hin$ and $\hik$, respectively, by
\begin{eqnarray}\label{eq:LambdaA} 
\Lambda_{\A} (\varrho) =\sum_x  \hat{\A}(x) V \varrho V^\ast \hat{\A}(x)  \, .
 \end{eqnarray}
This is an $\A$-channel since 
\begin{equation*}
\tr{\hat{\A}(x) V \varrho V^\ast \hat{\A}(x)} = \tr{\varrho V^\ast \hat{\A}(x) V } = \tr{\varrho \A(x)} \, .
\end{equation*}
We claim that $\Lambda_\A$ has the universal property.

First, we recall the preliminary result demonstrated earlier: each $\B$ that is jointly measurable with $\A$ can be written as $\B(y)=\hei{(\Lambda^{\B})}(|y\rangle \langle y|)$, where  $\{|y\rangle \}_{y \in \Omega_{\B}}$ is an orthonormal basis and $\Lambda^{\B}$ is the channel defined in \eqref{eq:A->B}. 
We will show that there exists a channel $\Gamma^\B$ such that 
\begin{equation}\label{eq:circ}
\Lambda^\B=\Gamma^\B \circ \Lambda_\A \, , 
\end{equation}
 where $\circ$ denotes the composition of two functions.
Then, using the Heisenberg form $\hei{(\Gamma^\B)}$ of $\Gamma^\B$, we define an observable $\B'$ as
\begin{equation}\label{eq:b'}
\B'(y)=\hei{(\Gamma^\B)}(|y\rangle \langle y|) \, .
\end{equation}
This observable satisfies the required equation \eqref{eq:universal} since
\begin{align}
\tr{\Lambda_\A(\varrho) \B'(y)}  \stackrel{\eqref{eq:b'}}{=} \tr{\Gamma^\B(\Lambda_\A(\varrho))
|y\rangle \langle y|} \\
 \stackrel{\eqref{eq:circ}}{=} \tr{\Lambda^\B(\varrho) |y\rangle \langle y|}  \stackrel{\eqref{eq:lambdab}}{=} \tr{\varrho \B(y) } \, .
\end{align} 

Hence, to complete the proof we need to show that for each $\B$  that is jointly measurable with $\A$, there exists a channel $\Gamma^\B$ such that \eqref{eq:circ} holds.
Let $\M$ be a joint observable of $\A$ and $\B$, and let $(\hik', \hat{\M}, V')$ be a Naimark dilation of $\M$.
We define a sharp observable $\hat{\A}'$ as $\hat{\A}'(x)=\sum_y \hat{\M}(x,y)$.
Since $\A(x)=\sum_y \M(x,y)$, we observe that $(\hik', \hat{\A}', V')$ is a Naimark dilation of $\A$. 
The initially fixed Naimark dilation $(\hik, \hat{\A}, V)$ was chosen to be minimal, so there exists an isometry $J:\hik\to\hik'$ satisfying $\hat{\A}'(x)J = J\hat{\A}(x)$ and $V'=JV$.
Taking also into account that $\hat{\M}(x,y)\hat{\M}(x',y)=\delta_{xx'}\hat{\M}(x,y)$, we obtain the auxiliary formula
\begin{align}\label{eq:aux}
\hat{\M}(x,y)J\hat{\A}(x')=\delta_{xx'} \hat{\M}(x,y)J \, .
\end{align}
Using \eqref{eq:aux} we can write $\Lambda^\B$ in the form
\begin{align*}
\Lambda^\B(\varrho) = \sum_{x,y} \tr{\hat{\A}(x) V \varrho V^\ast \hat{\A}(x) J^\ast \hat{\M}(x,y) J } \kb{y}{y} \, .
\end{align*}
We define a channel $\Gamma^\B$ as
\begin{align*}
\Gamma^\B(\varrho) = \sum_{x,y} \tr{\varrho J^\ast \hat{\M}(x,y) J} \kb{y}{y} \, .
\end{align*}
Using  \eqref{eq:aux} again one can confirm that the required equation $\Lambda^\B=\Gamma^\B \circ \Lambda_\A$ holds.
\end{proof}

We note that the channel $\Lambda_\A$   was introduced as the least disturbing $\A$-channel in \cite{HeMi13}.
In the context of sequential measurements, the channel $\Lambda_\A$ 
is the appropriate generalization of the L\"uders channel of sharp observables; $\Lambda_\A$ has the universal property and it reduces to the L\"uders channel whenever $\A$ is a sharp observable.
The price we have to pay is the larger output space $\hout$ compared to the input space $\hin$.

\paragraph{\bf{Channels without the universal property.}}

We now turn to a different variation of the problem. 
Suppose two jointly measurable observables $\A$ and $\B$ are given.
We may have limited resources, so perhaps we cannot realize any $\A$-channel having the universal property.
Hence, suppose we first perform a measurement described by a pair $(\A,\Lambda)$, where $\Lambda$ is an $\A$-channel \emph{not} having the universal property.
We want to know if we can still implement a measurement of $\B$, i.e., whether there exists an observable $\B'$ such that
\begin{align}\label{eq:universal-2}
\tr{\Lambda(\varrho) \B'(y)}=\tr{\varrho \B(y) }
\end{align}
for all input states $\varrho$ and outcomes $y\in\Omega_\B$.

To give a general answer to this question, we recall that by the \emph{Stinespring dilation theorem} \cite{CBMOA03} any channel $\Lambda$ can be written in the form
\begin{equation}\label{eq:stinespring-s}
\Lambda(\varrho)=\ptr{\hik}{V\varrho V^*} \, , 
\end{equation}
where $\hik$ is a Hilbert space attached to an environment system, $V:\hin\to\hout\otimes \hik$ is an isometry and $\ptr{\hik}{\cdot}$ is the partial trace over $\hik$.
Tracing over $\hout$ rather than $\hik$ we obtain the corresponding \emph{conjugate channel} (also called complementary channel):
\begin{equation}
\bar{\Lambda}(\varrho)=\ptr{\hout}{V\varrho V^*} \, .
\end{equation}
Hence, the conjugate channel $\bar{\Lambda}$ describes what happens to the environment system during the measurement process described by $\Lambda$.

\begin{theorem}\label{thm:conjugate}
For a channel $\Lambda$ and observable $\B$, there exists an observable $\B'$ satisfying \eqref{eq:universal-2} 
if and only if the conjugate channel $\bar{\Lambda}$ of $\Lambda$ is a $\B$-channel. 
\end{theorem}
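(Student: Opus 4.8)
The plan is to express both sides of the equivalence in terms of a single Stinespring isometry $V:\hin\to\hout\otimes\hik$ that implements $\Lambda$, so that $\Lambda(\varrho)=\ptr{\hik}{V\varrho V^*}$ and $\bar{\Lambda}(\varrho)=\ptr{\hout}{V\varrho V^*}$ are read off from the \emph{same} $V$. Writing the defining relation \eqref{eq:universal-2} in the Heisenberg picture, the existence of $\B'$ is equivalent to the existence of an observable $\B'$ on $\hout$ with $\hei{\Lambda}(\B'(y))=V^*(\B'(y)\otimes\id)V=\B(y)$ for every $y$. On the other side, by Ozawa's criterion \eqref{eq:Kraus} the statement ``$\bar{\Lambda}$ is a $\B$-channel'' means precisely that there are completely positive maps $\bar{\Phi}_y$ with $\sum_y\bar{\Phi}_y=\bar{\Lambda}$ and $\tr{\bar{\Phi}_y(\varrho)}=\tr{\varrho\B(y)}$. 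Thus the task reduces to passing back and forth between observables $\B'$ on the output space $\hout$ and instruments $\{\bar{\Phi}_y\}$ decomposing the conjugate channel; recognizing that the environment of $\Lambda$ is exactly the output of $\bar{\Lambda}$ (and vice versa) is the conceptual key.

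For the direction ``only if'', given such a $\B'$ I would set $\bar{\Phi}_y(\varrho)=\ptr{\hout}{(\sqrt{\B'(y)}\otimes\id)\,V\varrho V^*\,(\sqrt{\B'(y)}\otimes\id)}$. Each $\bar{\Phi}_y$ is manifestly completely positive; summing over $y$ and using $\sum_y\B'(y)=\id$ recovers $\bar{\Lambda}$; and a short partial-trace computation (moving $\B'(y)\otimes\id$ around under the full trace) gives $\tr{\bar{\Phi}_y(\varrho)}=\tr{V^*(\B'(y)\otimes\id)V\,\varrho}=\tr{\varrho\B(y)}$. Hence $\bar{\Lambda}$ is a $\B$-channel.

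The substantive direction is ``if''. Here I would fix the canonical Kraus operators $K_j=(\bra{e_j}\otimes\id)V$ of $\bar{\Lambda}$ attached to an orthonormal basis $\{\ket{e_j}\}$ of $\hout$, and take any Kraus operators $\{L_{y,k}\}_k$ of $\bar{\Phi}_y$, so that the condition $\tr{\bar{\Phi}_y(\varrho)}=\tr{\varrho\B(y)}$ becomes $\sum_k L_{y,k}^*L_{y,k}=\B(y)$. Since $\{L_{y,k}\}_{y,k}$ and $\{K_j\}_j$ are two Kraus representations of the \emph{same} channel $\bar{\Lambda}$, the unitary-freedom theorem for Kraus decompositions supplies a unitary matrix $(w_{(y,k),j})$ with $L_{y,k}=\sum_j w_{(y,k),j}K_j=(\bra{\phi_{y,k}}\otimes\id)V$, where $\ket{\phi_{y,k}}=\sum_j\overline{w_{(y,k),j}}\ket{e_j}\in\hout$. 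Setting $\B'(y)=\sum_k\kb{\phi_{y,k}}{\phi_{y,k}}$ then gives $\hei{\Lambda}(\B'(y))=V^*(\B'(y)\otimes\id)V=\sum_k L_{y,k}^*L_{y,k}=\B(y)$, exactly as required.

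The one genuine obstacle I expect is normalization. Relating the two Kraus sets may require padding the shorter one with zero operators, so a priori I obtain only $\sum_y\B'(y)=G$ for some positive $G\le\id$ rather than $G=\id$. I would dispose of this by observing that the deficiency $\id-G$ is assembled from the vectors $\ket{\phi}$ attached to the \emph{padded} (zero) Kraus operators, for which $L=(\bra{\phi}\otimes\id)V=0$ and hence $V^*(\kb{\phi}{\phi}\otimes\id)V=0$; consequently $\hei{\Lambda}(\id-G)=0$. Absorbing the deficiency into a single outcome, $\B'(y_0)\mapsto\B'(y_0)+(\id-G)$, produces a bona fide observable on $\hout$ while leaving every marginal $\hei{\Lambda}(\B'(y))=\B(y)$ unchanged. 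This closes the remaining gap and establishes the equivalence.
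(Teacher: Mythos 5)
Your proof is correct, and it reaches the conclusion by the same structural route as the paper --- fix one Stinespring isometry $V:\hin\to\hout\otimes\hik$, pass to the Heisenberg picture so that the existence of $\B'$ becomes $V^*(\B'(y)\otimes\id)V=\B(y)$, and identify observables $\B'$ on $\hout$ with instruments decomposing the conjugate channel $\bar{\Lambda}=\ptr{\hout}{V\cdot V^*}$. The difference is in how that identification is justified. The paper treats it as a known fact: it cites the Radon--Nikodym theorem for quantum operations (Arveson, Raginsky, and the discussion in [HeMiRe14]) to assert directly that $\hei{\bar{\Lambda}}$ is a $\B$-channel iff such a $\B'$ exists. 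You instead prove both directions by hand: the ``only if'' direction by exhibiting the instrument $\bar{\Phi}_y(\varrho)=\ptr{\hout}{(\sqrt{\B'(y)}\otimes\id)V\varrho V^*(\sqrt{\B'(y)}\otimes\id)}$, and the ``if'' direction via the unitary freedom of Kraus representations, writing each Kraus operator of $\bar{\Phi}_y$ as $(\bra{\phi_{y,k}}\otimes\id)V$ and setting $\B'(y)=\sum_k\kb{\phi_{y,k}}{\phi_{y,k}}$. Your treatment of the normalization defect is the one point a careless write-up would miss, and you handle it correctly: the padded zero Kraus operators contribute vectors $\ket{\phi}$ with $V^*(\kb{\phi}{\phi}\otimes\id)V=0$, so $\hei{\Lambda}(\id-G)=0$ and the deficiency can be absorbed into a single outcome without changing any marginal. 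What your version buys is self-containedness and an explicit recipe for constructing $\B'$ from a given instrument; what the paper's version buys is brevity and immediate validity in the general (infinite-dimensional, continuous-outcome) setting covered by the operator-algebraic Radon--Nikodym theorem, where a naive Kraus-padding argument would need more care.
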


\begin{proof}
The Stinespring dilation \eqref{eq:stinespring-s}, written in the Heisenberg picture, reads
\begin{equation}\label{eq:stinespring-h}
\hei{\Lambda}(T) =V^*(T\otimes \id)V \, ,
\end{equation}
and the corresponding conjugate channel is then
\begin{equation}
\hei{\bar{\Lambda}}(S)=V^*(\id \otimes S)V \, .
\end{equation}
As explained in \cite{HeMiRe14}, it follows from the Radon-Nikodym theorem for quantum operations  \cite{Arveson69,Raginsky03} that $\hei{\bar{\Lambda}}$ is a $\B$-channel if and only if there exists an observable $\B'$ such that $V^*(\B'(y) \otimes \id)V =\B(y)$.
Inserting this into \eqref{eq:stinespring-h} gives $\hei{\Lambda}((\B'(y))=\B(y)$, which is \eqref{eq:universal-2}  but in the Heisenberg picture.
\end{proof}

The formulation of Theorem \ref{thm:conjugate} is slightly loose since the existence of $\B'$ may seem to depend on the choice of the conjugate channel. 
However, all conjugate channels of a given channel $\Lambda$ are equivalent, and for this reason the formulation of Theorem \ref{thm:conjugate} is solid.
More specifically, a channel $\Lambda$ has many Stinespring dilations, each of them giving rise to a conjugate channel.
These conjugate channels can be different, but they are equivalent in the sense that each of them can be obtained from any other by concatenating with some other channel. 
This implies that if one conjugate channel of $\Lambda$ is a $\B$-channel, then all conjugate channels of $\Lambda$ are $\B$-channels.
Therefore, it does not matter which conjugate channel we use in Theorem \ref{thm:conjugate}.

\paragraph{\bf{Qubit example.}}

Let us fix $\hin=\complex^2$ and consider two families of binary qubit observables $\A_s$ and $\B_{t,\theta}$, where
\begin{align*}
\A_s(\pm 1) & = \half \bigl( \id \pm s \sigma_z \bigr) \\
\B_{t,\theta}(\pm 1) &= \half \bigl( \id \pm t (\sin\theta\sigma_x +\cos\theta \sigma_z) \bigr) \, ,
\end{align*}
and the parameters belong to the intervals $s,t\in(0,1]$ and $\theta\in[0,\pi/2]$ respectively.
As proved in \cite{Busch86}, $\A_s$ and $\B_{t,\theta}$ are jointly measurable if and only if 
\begin{equation}\label{eq:busch}
s^2 + t^2 - \cos^2\theta s^2 t^2 \leq 1 \, .
\end{equation}
First, we want to see how the pairs satisfying this inequality can be implemented sequentially.

There exists an $\A_s$-channel satisfying the nondisturbance condition \eqref{eq:nondisturbance} for $\B_{t,\theta}$ if and only if $\A_s$ and  $\B_{t,\theta}$  commute \cite[Prop. 6]{HeWo10}, which is the case when $\theta\in\{0,\pi/2\}$.
Therefore, most of the realizable joint measurements must be performed by first measuring $\A_s$, followed by some modified version $\B'_{t,\theta}$ of $\B_{t,\theta}$. 
The modified observable $\B'_{t,\theta}$ and a suitable $\A_s$-channel are not difficult to find in this simple case; we can choose $\Lambda$ to be the L\"uders channel of $\A_s$ and $\B'_{t,\theta}=\B_{1,\theta}$. 
With these choices \eqref{eq:universal-2} is satisfied, hence 
the L\"uders channel can be used to measure all binary observables jointly measurable with $\A$.

Let us then consider another pair of observables in order to  demonstrate that the L\"uders channel of $\A_s$ does not have the universal property.
Fix $0<s<1$ and let $\C$ be the following four outcome observable:
\begin{align*}
& \C(1,\pm1)=\frac{1 \pm s}{4} (1 \pm \sigma_z) \, , \C(-1,1)=\frac{1 - s}{4} (1 + \sigma_x) \\
& \C(-1,-1)=\frac{1 - s}{4} (1 - \sigma_x) + \frac{s}{2} (1-\sigma_z)
\end{align*}
We have $\C(1,1)+\C(1,-1)=\A_s(1)$ and $\C(-1,1)+\C(-1,-1)=\A_s(-1)$, hence $\C$ and $\A_s$ are jointly measurable.
One can utilize Theorem \ref{thm:conjugate} to see that there is no $\C'$ such that 
\begin{equation}\label{eq:cannot-hold}
\luders{\A}^\ast(\C'(j,k))=\C(j,k) \quad \textrm{for all $j,k=\pm 1$}.
\end{equation}
We can also confirm this fact directly; let us make a counter assumption that there exists an observable $\C'$ satisfying \eqref{eq:cannot-hold}.

Since the operator $\C(-1,1)$ is rank-1 and the operators $\A_s(\pm 1)$ are invertible, it follows that $\C'(-1,1)$ is rank-1 as well.
As $\Lambda$ is trace preserving, we conclude that $\Lambda(P)=\half (1 + \sigma_x)$
must hold for some 1-dimensional projection $P$.
A direct calculation shows that $\Lambda(P)$ is a projection if and only if $P=\half (1 \pm \sigma_z)$, in which case we have $\Lambda(P)=P$.
Therefore, $\Lambda(P)=\half (1 + \sigma_x)$ cannot be satisfied by any projection $P$.
As a conclusion, the L\"uders channel of $\A_s$ does not have the universal property.

One can compare the L\"uders channel of $\A_s$ with the universal channel $\Lambda_{\A_s}$ defined in \eqref{eq:LambdaA}. 
The first has $\hout=\complex^2$, while the latter has $\hout=\complex^4$.
The larger output space makes the subsequent measurement able to take advantage also of the `information leaked to the environment', thereby leading to the universality property.

\paragraph{Acknowledgments}
The authors are grateful to David Reeb, Daniel Reitzner and Leon Loveridge for their comments on an earlier version of this paper.
TH acknowledges the financial support from the Academy of Finland (grant no. 138135).
TM thanks JSPS for the financial support (JSPS KAKENHI Grant Numbers 22740078).

\end{document}